\newcommand*\bigcdot{\mathpalette\bigcdot@{.5}}
\newcommand*\bigcdot@[2]{\mathbin{\vcenter{\hbox{\scalebox{#2}{$\m@th#1\bullet$}}}}}
\newtheorem{theorem}{Theorem}
\newtheorem{lemma}{Lemma}
\newtheorem{prop}{Proposition}
\begin{document}

\title{\LARGE{Secure Rate-Splitting Multiple Access Transmissions in LMS Systems }}

\author{Minjue He, Hui Zhao, Xiaqing Miao, Shuai Wang,  and Gaofeng Pan \vspace{-0.5cm}
\thanks{This work was supported by the National Key Research and Development Program of China under Grant 2021YFC3320200, and in part by the NSF of China under Grant 62171031 and 62101050. The associate editor coordinating the review of this paper and approving it for publication was G. Bacci. \emph{(Minjue He and Hui Zhao contributed equally to this work.)
(Corresponding author: Xiaqing Miao.)}}
\thanks{M. He is with Science and Technology on Electronic Information Control Laboratory, Chengdu 610043 China (e-mail: heminjue88@outlook.com).}
\thanks{H. Zhao is with the Communication Systems Department, EURECOM, 06410 Sophia Antipolis, France (e-mail:
hui.zhao@eurecom.fr).}
\thanks{X. Miao, S. Wang, and G. Pan are with the School of Information and Electronics, Beijing Institute of Technology, Beijing 100081, China (e-mail: xqmiao@bit.edu.cn; swang@bit.edu.cn; gfpan@bit.edu.cn).}
}

\markboth{IEEE Communications Letters}%
{Shell \MakeLowercase{\textit{et al.}}: A Sample Article Using IEEEtran.cls for IEEE Journals}


\maketitle

\begin{abstract}
This letter investigates the secure delivery performance of the rate-splitting multiple access scheme in land mobile satellite (LMS) systems, considering that the private messages intended by a terminal can be eavesdropped by any others from the broadcast signals. Specifically, the considered system has an $N$-antenna satellite and numerous single-antenna land users. Maximum ratio transmission (MRT) and matched-filtering (MF) precoding techniques are adopted at the satellite separately for the common messages (CMs) and for the private messages (PMs), which are both implemented based on the estimated LMS channels suffering from the Shadowed-Rician fading. Then, closed-form expressions are derived for the ergodic rates for decoding the CM, and for decoding the PM at the intended user respectively, and more importantly, we also derive the ergodic secrecy rate against eavesdropping. Finally, numerical results are provided to validate the correctness of the proposed analysis models, as well as to show some interesting comparisons.
\end{abstract}

\begin{IEEEkeywords}
Ergodic rate, land mobile satellite system, Shadowed-Rician fading, and rate splitting multiple access
\end{IEEEkeywords}

\section{Introduction}
\IEEEPARstart{A}{s} land mobile satellite (LMS) systems are inherently typical multi-user systems, realizing robust multi-user access and interference management in LMS is vital for providing the expected quality of service \cite{PanTMC}.  Relying on the rate-splitting principle, rate-splitting multiple access (RSMA) has been recently proposed and regarded as a promising multiple access, interference management, and multi-user scheme in the broader communication community \cite{Bruno,Joudeh}. Therefore, considerable research has been conducted to study the performance of integrating RSMA with LMS systems regarding transmission sum rate \cite{LinIOT21,Khan,Huang}, max-min fairness \cite{YinTWC}, and unmet system capacity \cite{Cui}. Specifically, Refs. \cite{Khan, Huang,Cui,YinTWC} adopted RSMA for satellite-terrestrial communications, while RSMA was adopted for unmanned aerial vehicle-terrestrial transmission in \cite{LinIOT21}.

Information security issues can not be ignored in LMS systems. Because signals are delivered in such large-scale transmission space, it leads to the increased eavesdropping probability for malicious receivers located in the coverage space of the satellite. However, none of the aforementioned works has studied the secrecy performance of applying RSMA in satellite-terrestrial communication systems. Though a secure beamforming scheme was proposed for RSMA-based cognitive satellite-terrestrial networks in the presence of multiple eavesdroppers \cite{LinTWC21}, RSMA only worked in cognitive terrestrial systems, not in satellite-terrestrial transmissions.

To the best of the authors' knowledge, no studies have been found to investigate the secrecy performance of RSMA in LMS systems. This letter considers a typical LMS system consisting of a satellite delivering both the common messages (CMs) and the private messages (PMs) simultaneously to multiple terrestrial terminals under the RSMA scheme to improve transmission performance. It is obvious that each terrestrial terminal can eavesdrop on the PMs intended by any others from the received signal. In this letter, by integrating the analytical methods in \cite{Salem,QiZhang,Zhao_VectorCC,Yeon_Geun}, we derive closed-form expressions for the ergodic rates for decoding the CM and for decoding the PM at the intended user respectively, as well as derive the ergodic secrecy rate against eavesdropping\footnote{\emph{Notations.} We use $\mathbb{C}$ to denote the complex number set. For an integer $K>0$, $[K] \triangleq \{1,2,\cdots, K\}$. ${\bf I}_L$ is $L \times L$ identity matrix, while ${\bf 0}_L$ denotes the $L \times 1$  vector with all elements equaling zero. ${\rm Nakagami}(m,\Omega)$, $\mathcal{N}(0,\sigma^2)$ and $\mathcal{CN}(0,\sigma^2)$ denote the Nakagami-$m$ distribution with the shape parameter $m$ and the spread parameter $\Omega$, the normal distribution with zero-mean and the variance $\sigma^2$, the complex normal distribution with zero-mean and the variance $\sigma^2$ respectively. For a matrix ${\bf A}$, we use ${\bf A}^T$, ${\bf A}^*$ and ${\bf A}^H$ to denote the non-conjugate transpose, the conjugate part and the conjugate transpose of ${\bf A}$ respectively. $\mathbb{E}\{\cdot\}$ and ${\rm Tr}\{\cdot\}$ denote the average and trace operators respectively. $|\cdot|$ and $||\cdot||$ denote the magnitude of a complex number and the norm-2 operator respectively.}.


\section{System Model}
This work considers a typical LMS system consisting of a satellite transmitting information to multiple terrestrial terminals via the RSMA scheme and that each terrestrial terminal can eavesdrop on the PMs intended by any other users from the received signal. Some assumptions are made as follows.
\begin{itemize}
    \item An $N$-antenna satellite  communicates with $K$ single-antenna land users using RSMA technique.
     The maximum ratio transmission (MRT) is adopted for the CMs, denoted by $x_c$ with unit-power, while the matched-filtering (MF) precedes the PMs to the $K$ users ($x_1,x_2,\cdots,x_K$), where $x_k \ (k \in [K])$ with unit-power is the PM intended by the $k$-th user.
    \item The satellite-terrestrial channels are modelled as independent identically distributed (i.i.d) Shadowed-Rician (SR) fading channels \cite{Abdi}, whose statistics are determined by $m$, $b$, and $\Omega$. We use  ${\rm SR}(m,\Omega,b)$ to denote a specific SR fading channel. Specifically, the SR channel gain is $h = h^{(l)} + h^{(s)}$, where $h^{(l)} \sim {\rm Nakagami(m,\Omega)}$ models the line-of-sight component, and $h^{(s)} \sim \mathcal{CN}(0,2b)$ models the scatter component.
 \item The downlink channel training is perfect at the receivers, while the channel feedback to the satellite appears some errors, i.e., imperfect channel state information (CSI) at the transmitter \cite{Joudeh,Salem}.  For the maximum likelihood (ML) estimator in TDD training and for the actual channel vector ${\bf h} \in \mathbb{C}^{N \times 1}$, the ML estimate at the satellite is $\hat {\bf h} = {\bf h} + \tilde {\bf h}$, where $\tilde {\bf h} \sim \mathcal{CN}({\bf 0}_N, \sigma_e^2 {\bf I}_N)$ results from the additive white Gaussian noise (AWGN) during the CSI estimation, and which is independent of ${\bf h}$ \cite{Arti,Jinlong}.
\end{itemize}

Given the precoding matrix ${\bf W} = [{\bf w}_c, {\bf w}_1, {\bf w}_2, \cdots, {\bf w}_K] \in \mathbb{C}^{N \times (K+1)}$ and the RSMA power-splitting diagonal matrix ${\bf P} = {\rm Diag}(\sqrt{\rho}, \sqrt{\bar \rho}, \sqrt{\bar\rho}, \cdots, \sqrt{\bar\rho}) \in \mathbb{C}^{(K+1) \times (K+1)}$ where $\bar \rho \triangleq 1-\rho$ and $\rho \in [0,1]$ is responsible for power splitting between CMs and PMs, the transmitted signal is
\begin{align}\label{tranmit_signal_eq}
    {\bf s} = \sqrt{\alpha} {\bf W P x} = \sqrt{\alpha \rho} {\bf w}_c x_c + \sum\nolimits_{i=1}^K \sqrt{\alpha \bar\rho} {\bf w}_i x_i,
\end{align}
where $\alpha$ is responsible for the power normalization averaged over signal symbols and channel realizations under the average power constraint $P_t$. Specifically, $\alpha = \frac{P_t}{\mathbb{E}\{ {\rm Tr} \{ {\bf W}^H {\bf W} {\bf P}^2\}\}}$.\footnote{It is easy to check that
$\mathbb{E}\{ ||{\bf s}||^2\} \!=\! \alpha \mathbb{E}\{ {\bf x}^H {\bf P} {\bf W}^H {\bf W} {\bf P} {\bf x} \} \!=\! \alpha \mathbb{E}\{ {\rm Tr} \{  {\bf W}^H {\bf W} {\bf P} {\bf x} {\bf x}^H {\bf P} \} \} \!=\! \alpha {\rm Tr} \{ \mathbb{E}\{{\bf W}^H {\bf W}\} {\bf P} \mathbb{E}\{ {\bf x} {\bf x}^H \} {\bf P} \} = P_t$.}
The received signal at the $k$-th user can be expressed as
\begin{align}\label{y_k_eq_initial}
    y_k 
    &= \sqrt{\alpha \rho} {\bf h}_k^T {\bf w}_c x_c + \sum\nolimits_{i=1}^K \sqrt{\alpha \bar\rho} {\bf h}_k^T {\bf w}_i x_i + n_k,
\end{align}
where ${\bf h}_k \!\in\! \mathbb{C}^{N \times 1}$ denotes the channel vector from the satellite to the $k$-th user, and $n_k \!\sim\! \mathcal{CN}(0,\sigma_k^2)$ denotes the AWGN.

At the user side, after perfectly removing the CM $x_c$ using the Successive Interference Cancellation (SIC) technique \cite{Salem}, the received signal at the $k$-th user becomes
\begin{align}\label{y_k_private}
    y_k^{(p)} 
    = \sqrt{\alpha \bar \rho}{\bf h}_k^T {\bf W}^{(p)} {\bf x}^{(p)} + n_k,
\end{align}
where ${\bf x}^{(p)} \triangleq [x_1, x_2, \cdots, x_K] \in \mathbb{C}^{K \times 1}$ and ${\bf W}^{(p)} \triangleq   [{\bf w}_1, {\bf w}_2, \cdots, {\bf w}_K] \in \mathbb{C}^{N \times K}$.

\subsection{Performance Metrics}
The sum rate can be calculated by $R_{\rm sum} = R^{(c)} + \sum\nolimits_{k=1}^K R_k^{(p)}$, where $R^{(c)} = \min\big\{R_1^{(c)}, R_2^{(c)}, \cdots, R_K^{(c)}\big\}$ is the rate for the CM, $R_k^{(c)}$ is the rate for the CM at the $k$-th user, and $R_k^{(p)}$ is the rate for the PM intended by the $k$-th user.

As imperfect CSI is assumed, the delay-tolerant transmission is considered\footnote{In delivering videos from a geosynchronous (GEO) satellite, the land user can have a cache or buffer to store some video content for up to several minutes in order to have a seamless viewing when decoding the received signals from the GEO \cite{Zhao_ICC}. Therefore, the processing delay of advanced channel coding and signal regeneration in low SNR is acceptable in this case.}. Thus, sending the CMs
and the $k$th PM at ergodic rates given by $\bar R_k^{(c)} = \mathbb{E}\{ R_k^{(c)}\}$ and $\bar R_k^{(p)} = \mathbb{E}\{R_k^{(p)}\}$, respectively, guarantees successful
decoding by the $k$-th user \cite{Joudeh}. Accordingly, the ergodic sum rate can be evaluated by
\begin{align}
    \bar R_{\rm sum} = \mathbb{E}\{R_{\rm sum}\} = \min_{j \in [K]} \big\{ \bar R_j^{(c)} \big\} + \sum\nolimits_{k=1}^K \bar R_k^{(p)}.
\end{align}

In this work, we are also interested in the particular vulnerabilities of RSMA to eavesdropping. In this model, the eavesdropper can be any user, $i$, in the system trying to decode the $k$-th PM by $a)$ exploiting the CM together with the leakage caused by imperfect CSI, and $b)$ the leakage caused by the MF precoding\footnote{As indicated in \cite{Sadeghi}, ZF often outperforms MRT for multi-user multicasting in the massive MIMO regime that we are interested in. This leads to the fact that the optimal $\rho$ in RSMA often equals zero under MRT-ZF precoding in massive MIMO. It is worth noting that MRT-MMSE holds similar conclusions according to our simulation results, which are removed due to space limitation, and which will be deeply investigated in our future work. In contrast,  MRT-MF allows us to pick up a non-zero $\rho$ to achieve higher throughput than conventional MF (cf. Fig.~\ref{ER_rho_fig_ref}). Moreover,  MF is simple for implementation and its performance is as good as MMSE in low SNR \cite{QiZhang,Zhao_VectorCC,Yeon_Geun}. Thus,  MRT-MF suits the low-SNR-governed LMS system in the massive MIMO regime.  }. Therefore, the ergodic secrecy rate can be defined as $\bar R_k^{(s)} = \mathbb{E} \big\{ \big[ \big(R^{(c)} + R_k^{(p)}\big) - \max_{i\in [K], i \neq k} \big\{ R^{(c)} + R_{i \to k}^{(p)}\big\} \big]^+\big\}  = \mathbb{E} \big\{ \big[ R_k^{(p)} - \max_{i\in [K], i \neq k} \big\{  R_{i \to k}^{(p)}\big\} \big]^+\big\}$, where $[x]^+ \triangleq \max(x,0)$, and $R_{i \to k}^{(p)}$ is the rate at which user $i$ can decode the $k$-th PM. To ease the calculation, a tight approximation is proposed in \cite[Appendix F]{Salem}, as
\begin{align}\label{ESR_lower_def}
        \bar R_k^{(s)} &\approx \tilde R_k^{(s)} \triangleq
 \Big[ \bar R_k^{(p)} -  \max_{i\in [K], i \neq k} \Big\{ \tilde R_{i \to k}^{(p)}  \Big\}  \Big]^+ ,
    \end{align}
    where $\tilde R_{i \to k}^{(p)} \triangleq \log_2\big( 1 + \mathbb{E}\big\{\gamma_{i \to k}^{(p)} \big\} \big)$ is the upper-bound of $\bar R_{i \to k}^{(p)}$ after using the Jensen's inequality.

\subsection{MRT-MF Precoding}
MRT is implemented for the common stream, and MF precoding is applied for the private streams. Thus, the precoding vector for the CM is
${\bf w}_c = \sum\nolimits_{i=1}^K \hat {\bf h}^*_i$, where $\hat {\bf h}_i \in \mathbb{C}^{N \times 1}$ is the ML estimate of the actual channel vector ${\bf h}_i$. Specifically, the $\vartheta$-th element in $\hat{\bf h}_i$ (denoted by $\hat{\bf h}_i(\vartheta)$)  follows ${\rm SR}(m_i, \Omega_i, b_i+\sigma_{e_i}^2/2)$ for any $i \in [K]$ and $\vartheta \in [N]$.

The MF precoder based on the estimated channel is
${\bf W}^{(p)} = \big[ \hat{\bf h}_1^*, \hat{\bf h}_2^*, \cdots, \hat{\bf h}_K^* \big] \in \mathbb{C}^{N \times K}$. In the following, we will derive the power normalization factor $\alpha$ for the proposed precoding scheme. Before that, we define two parameters as follows
\begin{align}
   & B_i \triangleq \mathbb{E}\{|\hat {\bf h}_i(\vartheta)|^2\} = 2b_i+\sigma_{e_i}^2 + \Omega_i, \\
   &C_i \triangleq \mathbb{E}\{\hat {\bf h}_i(\vartheta)\} = \frac{\Gamma(m_i+\frac{1}{2})}{\Gamma(m_i)} \sqrt{\frac{\Omega_i}{m_i}}.
\end{align}
Now, we can present the expression for $\alpha$ in Proposition~\ref{alpha_prop}.

\begin{prop}\label{alpha_prop}
    Under the proposed MRT-MF precoding scheme designed for the estimated SR channel, the power normalization factor $\alpha$ in \eqref{tranmit_signal_eq} takes the form
    \begin{align}\label{alpha_eq}
    \alpha = \frac{P_t}{ N  \sum_{i=1}^K B_i + \rho N \sum_{i=1}^K \sum_{j=1,j\neq i}^K C_i C_j }.
\end{align}
\end{prop}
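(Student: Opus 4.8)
The plan is to evaluate the denominator in the defining relation $\alpha = P_t / \mathbb{E}\{ {\rm Tr}\{ {\bf W}^H {\bf W} {\bf P}^2 \}\}$ and show it equals the denominator in \eqref{alpha_eq}. Since ${\bf P}^2 = {\rm Diag}(\rho, \bar\rho, \cdots, \bar\rho)$ is diagonal, the trace collapses to a weighted sum of the squared column norms of ${\bf W} = [{\bf w}_c, {\bf w}_1, \cdots, {\bf w}_K]$, namely
\begin{align}
{\rm Tr}\{ {\bf W}^H {\bf W} {\bf P}^2 \} = \rho \| {\bf w}_c \|^2 + \bar\rho \sum\nolimits_{i=1}^K \| {\bf w}_i \|^2,
\end{align}
so the remaining task is to take the expectation of the two kinds of norm. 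The private-stream terms are immediate: because ${\bf w}_i = \hat{\bf h}_i^*$, one has $\| {\bf w}_i \|^2 = \sum_{\vartheta=1}^N |\hat{\bf h}_i(\vartheta)|^2$, and taking expectation componentwise gives $\mathbb{E}\{ \| {\bf w}_i \|^2 \} = N B_i$ directly from the definition of $B_i$, contributing $\bar\rho N \sum_{i=1}^K B_i$.

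The main work is the common-stream term. Writing ${\bf w}_c = \sum_{i=1}^K \hat{\bf h}_i^*$, I would expand
\begin{align}
\| {\bf w}_c \|^2 = \sum_{i=1}^K \sum_{j=1}^K \hat{\bf h}_i^T \hat{\bf h}_j^* = \sum_{i=1}^K \| \hat{\bf h}_i \|^2 + \sum_{i=1}^K \sum_{\substack{j=1 \\ j \neq i}}^K \hat{\bf h}_i^T \hat{\bf h}_j^*,
\end{align}
separating the diagonal ($i=j$) from the off-diagonal ($i \neq j$) contributions. The diagonal terms again yield $N \sum_i B_i$ by the $B_i$ argument. For the off-diagonal cross terms, the key step is to invoke the independence of the estimated channels across distinct users: for $i \neq j$ the expectation factorizes as $\mathbb{E}\{ \hat{\bf h}_i(\vartheta) \} \, \mathbb{E}\{ \hat{\bf h}_j^*(\vartheta) \}$, and since the scatter component is zero-mean and the LOS amplitude is real Nakagami, $\mathbb{E}\{ \hat{\bf h}_i(\vartheta) \} = C_i$ is real; each term thus contributes $C_i C_j$. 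Summing over $\vartheta$ and over $j \neq i$ gives $\mathbb{E}\{ \| {\bf w}_c \|^2 \} = N \sum_i B_i + N \sum_i \sum_{j \neq i} C_i C_j$.

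Finally I would assemble the pieces. Combining $\rho \, \mathbb{E}\{ \| {\bf w}_c \|^2 \}$ with the private part $\bar\rho N \sum_i B_i$, the two contributions proportional to $N \sum_i B_i$ merge into $(\rho + \bar\rho) N \sum_i B_i = N \sum_i B_i$, leaving the single residual term $\rho N \sum_i \sum_{j \neq i} C_i C_j$; substituting into $\alpha = P_t / \mathbb{E}\{ {\rm Tr}\{\cdots\}\}$ produces \eqref{alpha_eq}. I expect the only delicate point to be the off-diagonal cross term, where one must use both the cross-user independence to factorize the expectation and the reality of $\mathbb{E}\{\hat{\bf h}_i(\vartheta)\}$ (so no conjugation discrepancy arises), which is precisely what makes the product $C_i C_j$ appear rather than a $|C_i|^2$-type quantity.
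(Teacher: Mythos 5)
Your proof is correct and follows essentially the same route as the paper's: collapse the trace to $\rho\,\mathbb{E}\{\|{\bf w}_c\|^2\}+\bar\rho\sum_i\mathbb{E}\{\|{\bf w}_i\|^2\}$, get $NB_i$ from the diagonal terms, and expand $\|{\bf w}_c\|^2$ into diagonal and cross terms with the cross terms factorizing by independence into $C_iC_j$. Your explicit remark that $\mathbb{E}\{\hat{\bf h}_i(\vartheta)\}=C_i$ is real (so no conjugation issue arises) is a point the paper leaves implicit, but the argument is otherwise identical.
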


\begin{proof}
By considering the designed ${\bf w}_c$ and ${\bf W}^{(p)}$ and after some simple mathematical manipulations, we can have that
    \begin{align}\label{E_WW_eq}
    \mathbb{E}\{ {\rm Tr} \{ {\bf W}^H {\bf W} {\bf P}^2\}\}
     = \rho \mathbb{E}\big\{ ||{\bf w}_c||^2 \big\}+ \bar\rho N  \sum\nolimits_{i=1}^K B_i.
\end{align}
As for $\mathbb{E}\big\{ ||{\bf w}_c||^2 \big\}$, we can further have that
\begin{align}\label{E_wc_eq}
     \mathbb{E}&\big\{ ||{\bf w}_c||^2 \big\} = \mathbb{E}\Big\{ \sum\nolimits_{j=1}^K \hat {\bf h}_j^T \sum\nolimits_{i=1}^K \hat {\bf h}_i^* \Big\} \notag\\
    &= \sum\nolimits_{i=1}^K \mathbb{E} \big\{ ||\hat {\bf h}_i^*||^2 \big\} + \sum\nolimits_{i=1}^K \sum\nolimits_{j=1,j\neq i}^K \mathbb{E} \big\{ \hat {\bf h}_j^T \big\}  \mathbb{E} \big\{ \hat {\bf h}_i^* \big\} \notag\\
    &\overset{(a)}{=} N \sum\nolimits_{i=1}^K B_i +  N \sum\nolimits_{i=1}^K \sum\nolimits_{j=1,j\neq i}^K C_i C_j,
\end{align}
where $(a)$ follows from considering the definitions of $B_i$ and $C_i$. Combining \eqref{E_WW_eq} and \eqref{E_wc_eq} finally yields \eqref{alpha_eq}, which concludes the proof.
\end{proof}

\section{Performance Analysis}
Before presenting the main results, we define some parameters for any $k,i,j \in [K]$ and $k \neq i \neq j$, as follows
\begin{align}
    &D_k \triangleq \mathbb{E}\big\{ ||{\bf h}_k||^4 \big\}, \  E_{k,i} \triangleq \mathbb{E} \big\{ | {\bf h}_k^T {\bf h}_i^*|^2 \big\}, \notag \\
    &F_{i,k} \triangleq \mathbb{E} \big\{ {\bf h}_i^T   {\bf h}_k^*  ||{\bf h}_k||^2 \big\}, \
    G_{i,k,j} \triangleq \mathbb{E} \big\{ {\bf h}_i^T  {\bf h}_k^* {\bf h}_k^T {\bf h}_j^* \big\},
\end{align}
and the expressions for them are derived in Lemma~\ref{DEF_prop}.

\begin{lemma}\label{DEF_prop}
    The closed-form expressions for $D_k$, $E_{k,i}$, $F_{i,k}$ and $G_{i,k,j}$ respectively take the forms
    \begin{align}
        &D_k = N \Big( 4b_k^2 + 4 b_k \Omega_k + \frac{\Omega_k^2}{m_k} \Big) \!+\! N^2 A_k^2, \notag\\
        &E_{k,i} = N  A_k A_i + N(N-1) C_k^2  C_i^2, \notag\\
        &F_{i,k} \!=\!  N C_i \bigg(   \frac{\Gamma(m_k+{3}/{2})}{\Gamma(m_k)} \Big( \frac{\Omega_k}{m_k} \Big)^{\frac{3}{2}} \!+\! 3 b_k C_k  \!+\! C_k (N-1) A_k \!\! \bigg), \notag\\
        &G_{i,k,j} = N C_i C_j \big( A_k + (N-1) C_k^2  \big), \notag
    \end{align}
 where $A_k \!\triangleq\! \mathbb{E}\{ |{\bf h}_k(\vartheta)|^2 \} \!=\! 2b_k + \Omega_k$ for $k \in [K]$ and $\vartheta \in [N]$.
\end{lemma}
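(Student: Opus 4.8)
The plan is to reduce every one of the four vector expectations to scalar per-antenna moments, exploiting two independence structures: the entries of each ${\bf h}_k$ are i.i.d.\ across the antenna index $\vartheta \in [N]$, and the vectors belonging to distinct users are mutually independent. First I would fix the scalar model implied by the system description: a generic entry is $X = {\bf h}_k(\vartheta) = h^{(l)} + h^{(s)}$ with $h^{(l)} \sim {\rm Nakagami}(m_k,\Omega_k)$ real-valued and $h^{(s)} \sim \mathcal{CN}(0,2b_k)$ independent of it, so that writing $h^{(s)} = R + \jmath I$ gives $R,I \sim \mathcal{N}(0,b_k)$ independent. The only scalar ingredients I will ever need are the first moment $\mathbb{E}\{X\} = \mathbb{E}\{h^{(l)}\} = C_k$, which is real because the scatter is zero-mean and which coincides with the $C_k$ defined earlier since the estimation error is zero-mean; the second moment $\mathbb{E}\{|X|^2\} = A_k$; the conjugate-weighted third moment $\mathbb{E}\{X^*|X|^2\}$; and the fourth moment $\mathbb{E}\{|X|^4\}$. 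The latter two follow from the Nakagami identity $\mathbb{E}\{(h^{(l)})^n\} = \frac{\Gamma(m_k+n/2)}{\Gamma(m_k)}(\Omega_k/m_k)^{n/2}$ together with the Gaussian moments $\mathbb{E}\{R^2\} = b_k$, $\mathbb{E}\{R^4\} = 3b_k^2$, all odd moments vanishing.

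Next, for each quantity I would expand the inner products into double sums over antenna indices and push the expectation inside, splitting into the $N$ diagonal pairs $\vartheta = \vartheta'$ and the $N(N-1)$ off-diagonal pairs $\vartheta \neq \vartheta'$. For $D_k = \mathbb{E}\{(\sum_\vartheta |X_\vartheta|^2)^2\}$ the diagonal terms give $\mathbb{E}\{|X|^4\}$ and the off-diagonal terms give $(\mathbb{E}\{|X|^2\})^2 = A_k^2$. For $E_{k,i} = \mathbb{E}\{|\sum_\vartheta {\bf h}_k(\vartheta){\bf h}_i(\vartheta)^*|^2\}$, cross-user independence makes the diagonal terms factor as $A_kA_i$ and the off-diagonal terms as $C_k^2C_i^2$, using that the per-entry mean is real so conjugation leaves $C$ unchanged. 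The same diagonal/off-diagonal bookkeeping handles $G_{i,k,j}$: the factors ${\bf h}_i(\vartheta)$ and ${\bf h}_j(\vartheta')^*$ each contribute a mean $C_i,C_j$, leaving $\mathbb{E}\{{\bf h}_k(\vartheta)^*{\bf h}_k(\vartheta')\}$, which equals $A_k$ on the diagonal and $C_k^2$ off it, immediately giving $G_{i,k,j} = NC_iC_j(A_k + (N-1)C_k^2)$.

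The only genuinely delicate computation is the diagonal scalar moment entering $F_{i,k} = C_i[N\,\mathbb{E}\{X^*|X|^2\} + N(N-1)C_kA_k]$, namely $\mathbb{E}\{X^*|X|^2\}$. Here the real LoS part and the complex scatter part must be expanded jointly as $X^*|X|^2 = (h^{(l)}+R-\jmath I)((h^{(l)}+R)^2+I^2)$, and one must track carefully which cross terms survive: the imaginary part vanishes by the odd symmetry of $I$, and what remains collapses to the odd Nakagami moment $\frac{\Gamma(m_k+3/2)}{\Gamma(m_k)}(\Omega_k/m_k)^{3/2}$ plus a scatter-induced contribution proportional to $b_kC_k$. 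Combining this with the off-diagonal term $C_kA_k$ and the prefactor $C_i$ yields the stated $F_{i,k}$. I expect this moment, together with the analogous expansion of $\mathbb{E}\{|X|^4\}$ for $D_k$, to be the main source of sign- and coefficient-bookkeeping errors, since it is exactly where the non-Gaussian LoS mean couples to the Gaussian scatter; everything else is routine counting of $N$ diagonal versus $N(N-1)$ off-diagonal index pairs.
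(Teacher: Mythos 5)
Your proposal is correct and follows essentially the same route as the paper's Appendix~I: per-antenna diagonal/off-diagonal index bookkeeping, cross-user independence, and the decomposition ${\bf h}_k(\vartheta)=Z+X+\jmath Y$ for the delicate third-moment term (the paper phrases $E_{k,i}$ and $G_{i,k,j}$ via traces of $\mathbb{E}\{{\bf h}^*{\bf h}^T\}$, but that is the same computation as your explicit sums). One caveat at exactly the spot you flag as error-prone: carrying out $\mathbb{E}\{X^*|X|^2\}$ gives $\mathbb{E}\{(Z+X)^3\}+\mathbb{E}\{Z+X\}\mathbb{E}\{Y^2\}=\mathbb{E}\{Z^3\}+3b_kC_k+b_kC_k=\mathbb{E}\{Z^3\}+4b_kC_k$, so the scatter-induced contribution is $4b_kC_k$ rather than the $3b_kC_k$ appearing in the stated $F_{i,k}$ (the paper's own intermediate line implies this as well), so you should verify that coefficient rather than assert that your computation ``yields the stated $F_{i,k}$.''
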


\begin{proof}
    See Appendix I.
\end{proof}

\subsection{Ergodic Rate for Decoding CMs}
By using the relationship between the actual channel and the estimated channel, we can derive the received signal for decoding CMs in \eqref{y_k_eq_initial} under the proposed MRT-MF precoding scheme as \eqref{y_k_eq_MRT_MF}, shown at the top of this page,
\begin{figure*}
    \begin{align}\label{y_k_eq_MRT_MF}
        y_k \!=\! \sqrt{\alpha \rho}   {\bf h}_k^T \Big( \sum\nolimits_{i=1}^K  {\bf h}^*_i \Big) x_c \!+\! \sqrt{\alpha \rho}  {\bf h}_k^T \Big( \sum\nolimits_{i=1}^K \tilde {\bf h}^*_i \Big) x_c   \! +\!  \sum\nolimits_{i=1}^K \sqrt{\alpha  \bar\rho}  {\bf h}_k^T {\bf h}_i^* x_i  + \sum\nolimits_{i=1}^K \sqrt{\alpha \bar\rho} {\bf h}_k^T \tilde  {\bf h}_i^* x_i + n_k.
    \end{align}
\rule{18cm}{0.01cm}
\end{figure*}
and therefore, the SINR for decoding $x_c$ at the $k$-th user takes the form
  \begin{align}\label{SINR_k_c_eq}
    \gamma_k^{(c)} = \frac{\alpha \rho \big|{\bf h}_k^T \big( \sum\nolimits_{i=1}^K {\bf h}^*_i \big) \big|^2 + \alpha \rho    \sum\nolimits_{i=1}^K \sigma_{e_i}^2 ||{\bf h}_k||^2  }{\sigma_k^2  + \alpha  \bar\rho\sum\nolimits_{i=1}^K  \big( \sigma_{e_i}^2 ||{\bf h}_k||^2 +  | {\bf h}_k^T  {\bf h}_i^*|^2 \big)}.
\end{align}

Now, we present the first main result for the ergodic rate for decoding the CMs in RSMA-aided LMS systems.
\begin{theorem}\label{E_Rc_Thm}
    Under the proposed MRT-MF precoding scheme based on the estimated satellite-terrestrial channel, the ergodic rate for decoding the CM at the $k$-th user takes the form in \eqref{R_k_common_eq}, as shown at the top of this page,  where $\overset{\bigcdot}{\approx}$ represents that the derived approximation converges to $\bar R_{k}^{(c)}$ as $N \to \infty$ and $K \to \infty$.
\begin{figure*}
\begin{align}\label{R_k_common_eq}
    \bar R_k^{(c)} \overset{\bigcdot}{\approx} \tilde R_k^{(c)} \triangleq
    \log_2 \left( 1 + \alpha \rho \frac{ D_k +   \sum\nolimits_{\substack{i=1, i \neq k}}^K \big( 2F_{i,k} + E_{k,i} \big) +
   \sum\nolimits_{\substack{i=1, i \neq k}}^K \sum\nolimits_{\substack{j=1, j \neq k,i}}^K  G_{i,k,j} + N A_k \sum_{i=1}^K \sigma_{e_i}^2  }{\sigma_k^2  + \alpha \bar\rho \Big(D_k+ \sum_{i=1}^K \sigma_{e_i}^2 N A_k    +  \sum_{i=1,i\neq k}^K E_{k,i} \Big) } \right).
\end{align}
\rule{18cm}{0.01cm}
\end{figure*}
\end{theorem}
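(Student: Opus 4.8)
The plan is to begin from the definition $\bar R_k^{(c)}=\mathbb{E}\{\log_2(1+\gamma_k^{(c)})\}$, with $\gamma_k^{(c)}$ as in \eqref{SINR_k_c_eq}, and to abbreviate its numerator and denominator as $X_k\triangleq \alpha\rho|{\bf h}_k^T(\sum_{i=1}^K{\bf h}_i^*)|^2+\alpha\rho\sum_{i=1}^K\sigma_{e_i}^2||{\bf h}_k||^2$ and $Y_k\triangleq \sigma_k^2+\alpha\bar\rho\sum_{i=1}^K(\sigma_{e_i}^2||{\bf h}_k||^2+|{\bf h}_k^T{\bf h}_i^*|^2)$. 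The central step is the asymptotically tight replacement
\begin{align}
\bar R_k^{(c)}=\mathbb{E}\Big\{\log_2\Big(1+\tfrac{X_k}{Y_k}\Big)\Big\}\;\overset{\bigcdot}{\approx}\;\log_2\Big(1+\tfrac{\mathbb{E}\{X_k\}}{\mathbb{E}\{Y_k\}}\Big),
\end{align}
which I would justify by concentration rather than by Jensen's inequality, after which the problem reduces to evaluating the two expectations in closed form.

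To justify the replacement I would argue that the relative fluctuations of $X_k$ and $Y_k$ vanish, i.e. $X_k/\mathbb{E}\{X_k\}\to 1$ and $Y_k/\mathbb{E}\{Y_k\}\to 1$ almost surely as $N,K\to\infty$. This is the standard massive-MIMO concentration: each of $X_k$ and $Y_k$ is a sum of order-$N$ (and order-$K$) weakly dependent contributions built from the entries of the SR channels, so by the strong law of large numbers its coefficient of variation tends to zero. Consequently $X_k/Y_k\to\mathbb{E}\{X_k\}/\mathbb{E}\{Y_k\}$ almost surely; since $t\mapsto\log_2(1+t)$ is continuous and $\mathbb{E}\{Y_k\}\geq\sigma_k^2>0$, the continuous mapping theorem gives a.s. convergence of the integrand, and a uniform-integrability (dominated-convergence) check upgrades this to convergence of the expectations, which is exactly what the symbol $\overset{\bigcdot}{\approx}$ encodes.

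It then remains to compute $\mathbb{E}\{X_k\}$ and $\mathbb{E}\{Y_k\}$. For the numerator I would expand $|{\bf h}_k^T\sum_{i=1}^K{\bf h}_i^*|^2$ by isolating the self term $i=k$ from the cross terms, producing $||{\bf h}_k||^4$, the mixed terms $||{\bf h}_k||^2{\bf h}_k^T{\bf h}_i^*$ together with their conjugates for $i\neq k$, the squared cross-correlations $|{\bf h}_k^T{\bf h}_i^*|^2$, and the triple products ${\bf h}_k^T{\bf h}_i^*({\bf h}_k^T{\bf h}_j^*)^*$ for $i\neq j$. Invoking the mutual independence of the per-user channels and the closed forms $D_k$, $E_{k,i}$, $F_{i,k}$, $G_{i,k,j}$ supplied by Lemma~\ref{DEF_prop}, together with $\mathbb{E}\{||{\bf h}_k||^2\}=NA_k$ for the CSI-error terms, collapses the numerator to $\alpha\rho\big(D_k+\sum_{i\neq k}(2F_{i,k}+E_{k,i})+\sum_{i\neq k}\sum_{j\neq k,i}G_{i,k,j}+NA_k\sum_{i}\sigma_{e_i}^2\big)$. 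The same bookkeeping on the denominator uses $\sum_{i}\mathbb{E}\{|{\bf h}_k^T{\bf h}_i^*|^2\}=D_k+\sum_{i\neq k}E_{k,i}$ (separating again $i=k$ from $i\neq k$) and $\mathbb{E}\{||{\bf h}_k||^2\}=NA_k$, giving $\mathbb{E}\{Y_k\}=\sigma_k^2+\alpha\bar\rho(D_k+\sum_{i}\sigma_{e_i}^2NA_k+\sum_{i\neq k}E_{k,i})$. Substituting both expressions into the ratio reproduces \eqref{R_k_common_eq}.

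I expect the moment bookkeeping to be routine, since Lemma~\ref{DEF_prop} already furnishes every fourth-order moment that appears. The real obstacle is making the first step rigorous, i.e. turning $\overset{\bigcdot}{\approx}$ into a genuine limit: one must identify the correct growth orders of $X_k$ and $Y_k$ (the coherent MRT beamforming term grows faster, as $N^2$, than the interference and CSI-error contributions, which grow as $N$), confirm that the cross-user products whose means carry the off-diagonal $C_iC_j$ factors do concentrate, and verify enough uniform integrability of $\log_2(1+X_k/Y_k)$ to exchange limit and expectation. It is this asymptotic growth-rate analysis, rather than the algebra, that carries the proof.
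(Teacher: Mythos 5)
Your proposal is correct and follows essentially the same route as the paper: the paper also replaces $\mathbb{E}\{\log_2(1+X_k/Y_k)\}$ by $\log_2(1+\mathbb{E}\{X_k\}/\mathbb{E}\{Y_k\})$ (citing \cite[Lem.~1]{QiZhang} for this step rather than re-deriving the channel-hardening argument you sketch), and then performs the identical expansion of $|{\bf h}_k^T\sum_i{\bf h}_i^*|^2$ into the self, mixed, cross-correlation, and triple-product terms evaluated via Lemma~\ref{DEF_prop}. Your concentration justification is a self-contained version of what that cited lemma provides, so the two proofs differ only in whether that step is outsourced to a reference.
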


\begin{proof}
    Based on the SINR in \eqref{SINR_k_c_eq},  the ergodic rate for decoding the CMs at the $k$-th user takes the form $ \bar R_k^{(c)} = \mathbb{E}\big\{\log_2 \big( 1 + \gamma_k^{(c)} \big) \big\}$. According to \cite[Lem. 1]{QiZhang}, we can approximate $\bar R_k^{(c)}$ as
    \begin{small}
    \begin{align}\label{R_k_c_initial}
    \bar R_k^{(c)}  \overset{\bigcdot}{\approx}  \log_2 \! \left( \! 1+ \frac{  \alpha \rho \mathbb{E}\bigg\{ \Big|{\bf h}_k^T  \sum\limits_{i=1}^K {\bf h}^*_i  \Big|^2 +     \sum\limits_{i=1}^K \sigma_{e_i}^2 ||{\bf h}_k||^2 \bigg\}  }{\sigma_k^2  + \alpha  \bar\rho\sum\limits_{i=1}^K  \mathbb{E}\Big\{ \sigma_{e_i}^2 ||{\bf h}_k||^2 +  | {\bf h}_k^T  {\bf h}_i^*|^2 \Big\} } \! \right),
\end{align}
\end{small}
where the gap between $\bar R_{k}^{(c)}$ and $\tilde  R_{k}^{(c)}$ decreases as $N$  and $K$ increases \cite{QiZhang,Zhao_VectorCC}. For the useful signal power for decoding in \eqref{R_k_c_initial}, we can have that
\begin{align}
    &\mathbb{E}\Big\{ \big| {\bf h}_k^T  \sum\nolimits_{i=1}^K  {\bf h}^*_i  \big|^2 \Big\}
    = \mathbb{E}\Big\{ \big| {\bf h}_k^T   {\bf h}^*_k + \sum\nolimits_{i=1, i \neq k}^K {\bf h}_k^T  {\bf h}^*_i  \big|^2 \Big\}
  \notag\\
   & =  \mathbb{E} \big\{ || {\bf h}_k||^4 \big\}  +
   \sum\nolimits_{\substack{i=1\\ i \neq k}}^K \sum\nolimits_{\substack{j=1 \\ j \neq k,i}}^K  \mathbb{E} \big\{ {\bf h}_i^T  {\bf h}_k^* {\bf h}_k^T {\bf h}_j^* \big\}
   \notag\\
   &\hspace{0.5cm}+  2 \sum\nolimits_{\substack{i=1\\ i \neq k}}^K \mathbb{E} \big\{  {\bf h}_i^T {\bf h}_k^* ||{\bf h}_k||^2 \big\}
   + \sum\nolimits_{\substack{i=1\\ i \neq k}}^K
    \mathbb{E} \big\{ {\bf h}_i^T  {\bf h}_k^* {\bf h}_k^T {\bf h}_i^* \big\}.\notag
\end{align}
We can easily derive \eqref{R_k_common_eq} by using Lemma~\ref{DEF_prop} into the equation above and into \eqref{R_k_c_initial}, which concludes the proof.
\end{proof}

\subsection{Ergodic Rate for Decoding PMs}
After removing the CM via SIC, the received signal at the $k$-th user in \eqref{y_k_eq_MRT_MF} becomes $y_k^{(p)} = \sqrt{\alpha \bar\rho}  {\bf h}_k^T ({\bf h}_k^* + \tilde{\bf h}_k^* )x_k  + n_k + \sqrt{\alpha \bar\rho}\sum\nolimits_{i=1, i \neq k}^K   {\bf h}_k^T ({\bf h}_i^*+\tilde {\bf h}_i^*) x_i$.
The SINR for decoding $x_k$ at the $k$-th user takes the form \cite{Salem,Joudeh}
\begin{align}\label{SINR_kp_eq}
    \gamma_{k}^{(p)} &= \frac{\alpha \bar \rho || {\bf h}_k||^4 + \alpha  \bar\rho ||{\bf h}_k||^2  \sigma_{e_k}^2 }{\sigma_k^2  + \alpha  \bar\rho\sum_{\substack{i=1, i \neq k}}^K \big( |{\bf h}_k^T {\bf h}_i^*|^2 + ||{\bf h}_k||^2 \sigma_{e_i}^2 \big)}.
\end{align}

In Theorem~\ref{R_k_p_Thm}, we derive a tight approximation for the ergodic rate for decoding the PM intended by the $k$-th user.

\begin{theorem}\label{R_k_p_Thm}
    Under the proposed MRT-MF precoding scheme, a tight approximation for $\bar R_k^{(p)}$ takes the form $\bar R_{k}^{(p)} \overset{\bigcdot}{\approx} \tilde R_{k}^{(p)}$, where $\tilde R_{k}^{(p)}$ is given by
    \begin{align}\label{E_Rk_p_eq}
        \tilde R_{k}^{(p)} \triangleq \! \log_2 \! \Bigg( 1 \!+\!   \frac{\alpha \bar\rho D_k + \alpha \bar\rho \sigma_{e_k}^2 N A_k}{\sigma_k^2  \!+\! \alpha \bar\rho \sum\nolimits_{\substack{i=1 \\ i \neq k}}^K \big(  E_{k,i} + \sigma_{e_i}^2 N A_k \big) } \Bigg),
    \end{align}
    and where $\overset{\bigcdot}{\approx}$ represents that the derived approximation converges to $\bar R_{k}^{(p)}$ as $N \to \infty$ and $K \to \infty$, and where $A_k$, $D_k$ and $E_{k,i}$  are respectively given in Lemma~\ref{DEF_prop}.
\end{theorem}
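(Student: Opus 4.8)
The plan is to mirror the proof of Theorem~\ref{E_Rc_Thm}, since the private-message SINR in \eqref{SINR_kp_eq} has the same ratio-of-quadratic-forms structure as the common-message SINR. First I would write $\bar R_k^{(p)} = \mathbb{E}\big\{ \log_2\big(1 + \gamma_k^{(p)}\big) \big\}$ with $\gamma_k^{(p)}$ given by \eqref{SINR_kp_eq}, and then invoke the deterministic-equivalent approximation of \cite[Lem.~1]{QiZhang} to push the expectation separately into the numerator and the denominator, yielding
\begin{align}
    \bar R_k^{(p)} \overset{\bigcdot}{\approx} \log_2\Bigg( 1 + \frac{\mathbb{E}\big\{ \alpha \bar\rho ||{\bf h}_k||^4 + \alpha \bar\rho ||{\bf h}_k||^2 \sigma_{e_k}^2 \big\}}{\sigma_k^2 + \alpha \bar\rho \sum\nolimits_{i \neq k} \mathbb{E}\big\{ |{\bf h}_k^T {\bf h}_i^*|^2 + ||{\bf h}_k||^2 \sigma_{e_i}^2 \big\}} \Bigg), \notag
\end{align}
where the gap vanishes as $N,K \to \infty$, consistent with the $\overset{\bigcdot}{\approx}$ notation in the statement.

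Second, I would evaluate the three required expectations termwise, drawing entirely on Lemma~\ref{DEF_prop}. The quartic self-term gives $\mathbb{E}\{||{\bf h}_k||^4\} = D_k$, and the cross-interference term gives $\mathbb{E}\{|{\bf h}_k^T {\bf h}_i^*|^2\} = E_{k,i}$. The two CSI-error terms require only the second moment of the channel norm: because the entries of ${\bf h}_k$ are i.i.d.\ across the $N$ antennas, $\mathbb{E}\{||{\bf h}_k||^2\} = N\, \mathbb{E}\{|{\bf h}_k(\vartheta)|^2\} = N A_k$. Substituting these closed forms into the approximation above and factoring out $\alpha \bar\rho$ reproduces \eqref{E_Rk_p_eq} exactly.

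The only genuinely non-routine step is the first one, namely justifying the interchange of expectation with the logarithm and ratio via \cite[Lem.~1]{QiZhang}. This rests on the concentration of the quadratic forms $||{\bf h}_k||^2$ and ${\bf h}_k^T {\bf h}_i^*$ around their means in the massive-MIMO regime, which is precisely what makes the approximation asymptotically tight as $N,K\to\infty$; everything downstream is a direct substitution from Lemma~\ref{DEF_prop} and the per-antenna moment $A_k$.
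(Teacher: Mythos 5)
Your proposal is correct and follows exactly the route the paper intends: the paper's own proof of Theorem~\ref{R_k_p_Thm} is a one-line remark that it is ``similar to that of Theorem~\ref{E_Rc_Thm},'' i.e., apply the deterministic-equivalent approximation of \cite[Lem.~1]{QiZhang} to the SINR in \eqref{SINR_kp_eq} and then substitute $D_k$, $E_{k,i}$, and $\mathbb{E}\{\|{\bf h}_k\|^2\}=NA_k$ from Lemma~\ref{DEF_prop}. Your write-up simply makes those steps explicit.
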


\begin{proof}
The proof is similar to that of Theorem \ref{E_Rc_Thm}.
\end{proof}

\subsection{Ergodic Secrecy Rate}
Based on the RSMA principle, user $i$ detects first the CM and his own PM, then removes them using SIC to eavesdrop on the $k$-th PM. Accordingly, the received signal at user $i$
to detect the $k$-th PM is $y_{i \to k}^{(p)}  = \sqrt{\alpha \bar\rho}  {\bf h}_i^T ({\bf h}_k^* + \tilde{\bf h}_k^* ) x_k  +\sqrt{\alpha \bar\rho}\sum\nolimits_{j=1, j \neq k,i}^K   {\bf h}_i^T ({\bf h}_j^* + \tilde{\bf h}_j^* ) x_j  + n_i$. Then, the SINR for decoding $x_k$ at the $i$-th user is
\begin{align}\label{gamma_ik_eq}
\gamma_{i \to k}^{(p)} \!=\! \frac{\alpha \bar\rho \big( | {\bf h}_i^T {\bf h}_k^*|^2  + \sigma_{e_k}^2 ||{\bf h}_i||^2 \big)}{\sigma_i^2 + \alpha \bar\rho  \sum\nolimits_{j=1, j \neq k,i}^K \big( |{\bf h}_i^T {\bf h}_j^*|^2 + \sigma_{e_j}^2 ||{\bf h}_i||^2 \big)}.
\end{align}

\begin{prop}\label{R_i_k_prop}
    In low SNR, we can tightly approximate the upper-bound $\tilde R_{i \to k}^{(p)}$ in \eqref{ESR_lower_def}  as\footnote{It is intractable to derive an exact expression for $\tilde R_{i \to k}^{(p)}$ from which we can draw some useful insights. For that, we are interested in a simple but tight approximation in low SNR by following the proposed method in \cite{Yeon_Geun}.}
    \begin{align}\label{E_R_ik_eq}
        &\tilde R_{i \to k}^{(p)}  \approx \! \log_2 \! \Bigg( 1 \!+\! \frac{\alpha \bar \rho \big( E_{i,k}+\sigma_{e_k}^2 N A_i \big)}{\sigma_i^2 + \alpha \bar \rho \sum_{j=1,j\neq k,i}^K \big( E_{i,j} + \sigma_{e_j}^2 N A_i \big) }  \Bigg).
    \end{align}
\end{prop}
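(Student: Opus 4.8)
The plan is to reduce the proposition to approximating the single quantity $\mathbb{E}\{\gamma_{i\to k}^{(p)}\}$, since by the definition recalled in the footnote we have $\tilde R_{i\to k}^{(p)} = \log_2(1 + \mathbb{E}\{\gamma_{i\to k}^{(p)}\})$. Writing the SINR in \eqref{gamma_ik_eq} as $\gamma_{i\to k}^{(p)} = X/Y$ with numerator $X = \alpha\bar\rho(|{\bf h}_i^T{\bf h}_k^*|^2 + \sigma_{e_k}^2\|{\bf h}_i\|^2)$ and denominator $Y = \sigma_i^2 + \alpha\bar\rho\sum_{j\neq k,i}(|{\bf h}_i^T{\bf h}_j^*|^2 + \sigma_{e_j}^2\|{\bf h}_i\|^2)$, the target expression \eqref{E_R_ik_eq} is exactly $\log_2(1 + \mathbb{E}\{X\}/\mathbb{E}\{Y\})$. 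Hence the whole argument rests on justifying the ratio-of-expectations approximation $\mathbb{E}\{X/Y\} \approx \mathbb{E}\{X\}/\mathbb{E}\{Y\}$ in the low-SNR regime and then evaluating the two means in closed form.

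For the justification I would follow the low-SNR argument of \cite{Yeon_Geun}: expand $X/Y$ to second order about the deterministic point $(\mathbb{E}\{X\},\mathbb{E}\{Y\})$, which gives $\mathbb{E}\{X/Y\} = \mathbb{E}\{X\}/\mathbb{E}\{Y\} - \mathrm{Cov}(X,Y)/\mathbb{E}\{Y\}^2 + \mathbb{E}\{X\}\,\mathrm{Var}(Y)/\mathbb{E}\{Y\}^3 + \cdots$. In low SNR the transmit power, and therefore $\alpha$, is small, so both $X$ and the random part of $Y$ are $O(\alpha\bar\rho)$ while $\mathbb{E}\{Y\}$ remains bounded below by the noise floor $\sigma_i^2$; consequently the leading term is $O(\alpha\bar\rho)$ whereas the covariance and variance corrections are $O((\alpha\bar\rho)^2)$ and thus negligible. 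This step, namely controlling the Taylor remainder and showing that the fluctuation of $Y$ is subdominant to its mean (note $X$ and $Y$ are \emph{not} independent, as both involve ${\bf h}_i$), is where the claimed tightness in low SNR must be earned and is the main obstacle.

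Finally I would evaluate the two means, a routine step relying only on linearity of expectation and on quantities already derived. By the definition of $E_{i,k}$ in Lemma~\ref{DEF_prop} together with $\mathbb{E}\{\|{\bf h}_i\|^2\} = N\,\mathbb{E}\{|{\bf h}_i(\vartheta)|^2\} = N A_i$, the numerator mean is $\mathbb{E}\{X\} = \alpha\bar\rho(E_{i,k} + \sigma_{e_k}^2 N A_i)$; applying the same two facts term-by-term to the interference sum yields $\mathbb{E}\{Y\} = \sigma_i^2 + \alpha\bar\rho\sum_{j\neq k,i}(E_{i,j} + \sigma_{e_j}^2 N A_i)$. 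Substituting these into $\log_2(1 + \mathbb{E}\{X\}/\mathbb{E}\{Y\})$ reproduces \eqref{E_R_ik_eq} and concludes the argument.
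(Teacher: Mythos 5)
Your proposal is correct and lands on the same formula, but the justification of the key step differs from the paper's. The paper invokes the method of \cite{Yeon_Geun}: for each $\alpha$-scaled random term it applies Chebyshev's inequality, ${\rm Pr}\{|\alpha Z - \alpha\mathbb{E}\{Z\}|\ge\epsilon\}\le \alpha^2{\rm Var}\{Z\}/\epsilon^2\to 0$ as $\alpha\to 0$, to argue that $\alpha|{\bf h}_i^T{\bf h}_k^*|^2$, $\alpha|{\bf h}_i^T{\bf h}_j^*|^2$ and $\alpha\|{\bf h}_i\|^2$ are each near-deterministic and may be replaced by $\alpha E_{i,k}$, $\alpha E_{i,j}$ and $\alpha N A_i$ inside \eqref{gamma_ik_eq}; the SINR then becomes (approximately) deterministic and its expectation is read off directly. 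You instead keep the SINR as the random ratio $X/Y$ and justify $\mathbb{E}\{X/Y\}\approx\mathbb{E}\{X\}/\mathbb{E}\{Y\}$ by a second-order delta-method expansion, showing the covariance and variance corrections are $O((\alpha\bar\rho)^2)$ against a leading term of $O(\alpha\bar\rho)$ thanks to the noise floor $\sigma_i^2$ in $\mathbb{E}\{Y\}$. Your route is somewhat more quantitative --- it exhibits the order of the neglected terms and correctly flags that $X$ and $Y$ are dependent, something the paper's term-by-term substitution glosses over --- at the cost of needing the fourth-order channel moments to exist (they do for Shadowed-Rician fading). Your closing evaluation of $\mathbb{E}\{X\}$ and $\mathbb{E}\{Y\}$ via linearity, the definition of $E_{i,k}$, and $\mathbb{E}\{\|{\bf h}_i\|^2\}=NA_i$ matches the paper exactly.
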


\begin{proof}
As proposed in \cite{Yeon_Geun}, for a random variable $X$ with a finite variance ${\rm Var}\{X\}$, we have that $Y = c X$ is \emph{near deterministic} (``nearly all" random variables close to the mean) as $c$ goes to zero. This is can be easily verified by using the Chebyshev’s inequality. For an arbitrary $\epsilon >0$, we have that ${\rm Pr} \! \big\{ \big| Y - \mathbb{E}\{Y\}  \big| \ge \epsilon  \big\} \! \le \! \frac{ c^2 {\rm Var}\{X\}}{\epsilon^2} \longrightarrow 0, \text{ as } c \to 0$.
Therefore, for a fixed $\epsilon>0$, $Y$ converges to $\mathbb{E}\{Y\}$ as  $c \to 0$ \cite{Yeon_Geun}. By using this approximation method, for $\alpha \to 0$ (i.e., low SNR) in \eqref{gamma_ik_eq}, we have that
\begin{align}\label{h_ik_near}
    &\alpha \big| {\bf h}_i^T {\bf h}_k^* \big|^2 \approx  \alpha \mathbb{E} \big\{ \big|  {\bf h}_i^T {\bf h}_k^* \big|^2 \big\}
    = \alpha E_{i,k}, \\
    &\alpha \big|{\bf h}_i^T {\bf h}_j^* \big|^2 \approx  \alpha \mathbb{E} \big\{ \big|  {\bf h}_i^T {\bf h}_j^* \big|^2 \big\}
    = \alpha E_{i,j},\\
    &\alpha || {\bf h}_i||^2 \approx \alpha \mathbb{E} \{ || {\bf h}_i||^2 \} =  \alpha N A_i.\label{h_j_near}
\end{align}
Applying \eqref{h_ik_near}--\eqref{h_j_near} in \eqref{gamma_ik_eq} and then considering
that $\tilde R_{i \to k}^{(p)} = \log_2 \big( 1 \!+\! \mathbb{E} \big\{ \gamma_{i \to k}^{(p)} \big\} \big)$, finally yields \eqref{E_R_ik_eq} .
\end{proof}

\begin{theorem}
    Under the proposed MRT-MF precoding scheme and in low SNR, the tight bound in \eqref{ESR_lower_def} for the ergodic secrecy rate for the $k$-th user against eavesdropping from other users can be approximated as
    \begin{align}\label{ESR_k_eq}
        \tilde R_{k}^{(s)}
        \approx \Big[ \tilde R^{(p)}_k - \max_{i \in [K], i \neq k} \Big\{ \tilde R_{ i \to k}^{(p)} \Big\} \Big]^+,
    \end{align}
where $\tilde R_{k}^{(p)}$ and $\tilde R_{i \to k}^{(p)}$ are given in \eqref{E_Rk_p_eq} and \eqref{E_R_ik_eq} respectively.
\end{theorem}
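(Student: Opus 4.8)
The plan is to obtain \eqref{ESR_k_eq} by composing the two closed-form approximations already established, namely Theorem~\ref{R_k_p_Thm} for the intended PM rate and Proposition~\ref{R_i_k_prop} for the eavesdropping rate, and then substituting them into the tight bound \eqref{ESR_lower_def}. Observe first that \eqref{ESR_lower_def} has already replaced the exact eavesdropping rate $\bar R_{i\to k}^{(p)}$ by its Jensen upper bound $\tilde R_{i\to k}^{(p)}=\log_2\big(1+\mathbb{E}\{\gamma_{i\to k}^{(p)}\}\big)$, following \cite[Appendix F]{Salem}. Hence the only quantities still requiring a closed form are the intended rate $\bar R_k^{(p)}$ and the collection $\{\tilde R_{i\to k}^{(p)}\}_{i\neq k}$ inside the maximum.

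First I would invoke Theorem~\ref{R_k_p_Thm} to replace $\bar R_k^{(p)}$ by its massive-regime closed form $\tilde R_k^{(p)}$ of \eqref{E_Rk_p_eq}, where the symbol $\overset{\bigcdot}{\approx}$ certifies convergence as $N\to\infty$ and $K\to\infty$. Next I would invoke Proposition~\ref{R_i_k_prop} to replace each $\tilde R_{i\to k}^{(p)}$ by its low-SNR closed form of \eqref{E_R_ik_eq}, valid as $\alpha\to 0$. Inserting both of these into \eqref{ESR_lower_def} immediately reproduces the right-hand side of \eqref{ESR_k_eq}.

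The step requiring care is arguing that these substitutions, which live inside the inner $\max_{i\neq k}\{\cdot\}$ and the outer positive part $[\cdot]^+$, do not degrade the quality of the approximation. This follows because both $x\mapsto\max_i x_i$ and $x\mapsto[x]^+$ are $1$-Lipschitz, i.e.\ non-expansive: a uniform bound on the individual per-term errors transfers verbatim to the composite expression. Concretely, the elementary inequalities $\big|[a]^+-[b]^+\big|\le|a-b|$ and $\big|\max_i a_i-\max_i b_i\big|\le\max_i|a_i-b_i|$ show that the error of the whole bracketed expression is controlled by the largest error among its constituents, so the tightness of \eqref{E_Rk_p_eq} and \eqref{E_R_ik_eq} propagates unchanged.

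The main obstacle is conceptual rather than computational: Theorem~\ref{R_k_p_Thm} is an asymptotic ($N,K\to\infty$) statement, whereas Proposition~\ref{R_i_k_prop} is a low-SNR ($\alpha\to0$) statement, so the two are a priori justified in different regimes. I would close this gap by noting that the operating point of interest for the LMS system is simultaneously massive-MIMO and low-SNR, as motivated in the footnote on MRT-MF precoding; at that point both approximations hold at once, their error contributions are jointly negligible, and the non-expansiveness argument above then delivers \eqref{ESR_k_eq} as a valid tight approximation.
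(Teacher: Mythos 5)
Your proposal is correct and follows essentially the same route as the paper, whose proof is simply the direct substitution of Theorem~\ref{R_k_p_Thm} and Proposition~\ref{R_i_k_prop} into \eqref{ESR_lower_def}. The extra observations you add --- the non-expansiveness of $\max$ and $[\cdot]^+$ and the reconciliation of the massive-MIMO and low-SNR regimes --- are sensible refinements but not part of (nor needed beyond) the paper's one-line argument.
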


\begin{proof}
    We can easily obtain \eqref{ESR_k_eq} by using Theorem~\ref{R_k_p_Thm} and Proposition~\ref{R_i_k_prop} in \eqref{ESR_lower_def}.
\end{proof}

\section{Numerical Results}
We proceed to numerically demonstrate the accuracy of the derived expressions.  To ease the simulation, we assume that the land users are statistically symmetric, i.e., $\sigma_k^2 = \sigma^2$, $m_k = m$, $b_k = b$ and $\Omega_k = \Omega$ for any $k \in [K]$. In the imperfect CSI model, we consider that the CSI quality is allowed to be scaled with the $\text{SNR} \triangleq P_t/\sigma^2$, where $\sigma^2$ is normalized to 1 for simplicity. For that, $\sigma_{e_k}^2 = ( \text{SNR} \cdot L)^{-1}$, as suggested in \cite{Jinlong}, where $L$ is the number of training symbols.  In the numerical results, we consider three typical shadowing scenarios in the LMS channels \cite{Abdi}, as listed in the following table.

\renewcommand*{\arraystretch}{1.35}%
\begin{table}[h!]
  \centering~
   \scalebox{0.9}{
  \begin{tabular}{ c| c c c }
  Shadowing Scenarios  & $m$ & $b$ & $\Omega$ \\
  \hline  		
  Frequent Heavy Shadowing (FHS) & 0.739 & 0.063 & $8.97 \times 10^{-4}$ \\
  Overall Results (ORs) & 5.21 & 0.251 & 0.278 \\
  Average Shadowing (AS) & \;\;10.1\; & 0.126 & 0.835 \\
  \end{tabular}}~\vspace{-2ex}
\end{table}
\renewcommand*{\arraystretch}{1}%

In Fig.~\ref{EC_Pt_fig_ref}, we plot the ergodic sum rate versus SNR with different CSI estimation accuracies in AS. It is easy to see that the ergodic sum rate becomes larger as $L$ increases in the low to medium SNR region, while the rates converge in the high SNR region. This is because the estimation error variance $\sigma_{e_k}^2 = (\text{SNR}\cdot L)^{-1}$ decreases as SNR increases, which finally leads to a marginal impact of estimation errors on the transmission performance when the SNR is sufficiently large.

\begin{figure}[!t]
             \centering
             \includegraphics[width= 3.5 in]{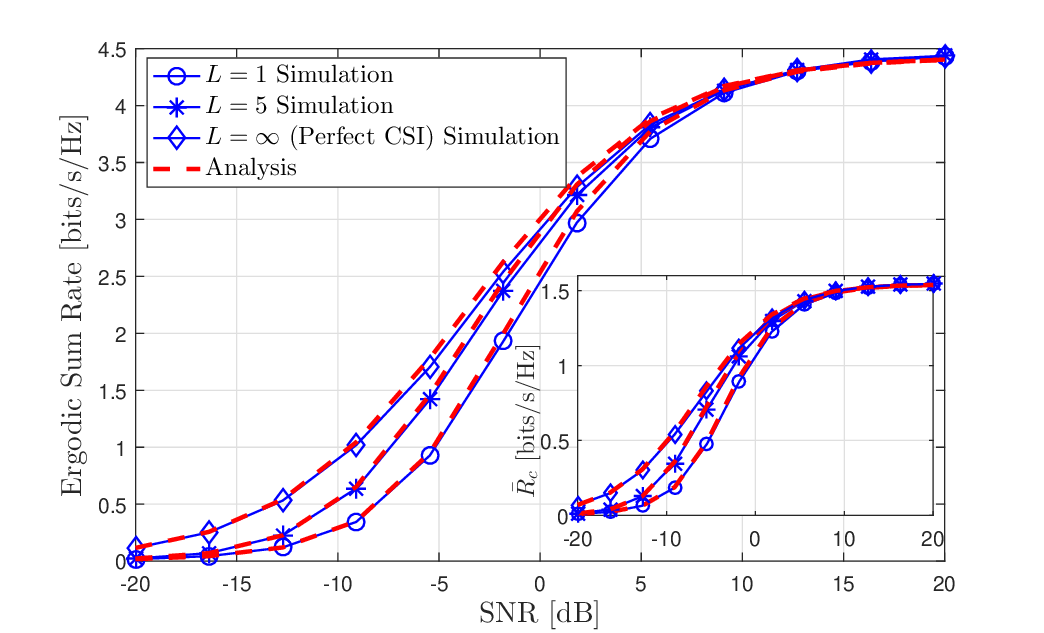}
             \vspace{-0.7cm}
            \caption{$\bar R_{\rm sum}$ and $\bar R_c$  vs. SNR for $N=8$, $K=2$ and $\rho =0.5$ in AS }\label{EC_Pt_fig_ref}\vspace{-0.2cm}
        \end{figure}	

Fig.~\ref{ESR_rho_fig_ref} plots the ergodic secrecy rate versus SNR for serving different numbers of land users in FHS. It is obvious that the secure performance is improved as $K$ decreases due to fewer potential eavesdroppers. In Figs.~\ref{EC_Pt_fig_ref}--\ref{ESR_rho_fig_ref}, both the ergodic sum rate and the ergodic secrecy rate are monotonically increasing with SNR, but they almost remain unchanged when the SNR is sufficiently large, as the MF precoder is interference-limited.

\begin{figure}[!t]
             \centering
             \includegraphics[width= 3.5 in]{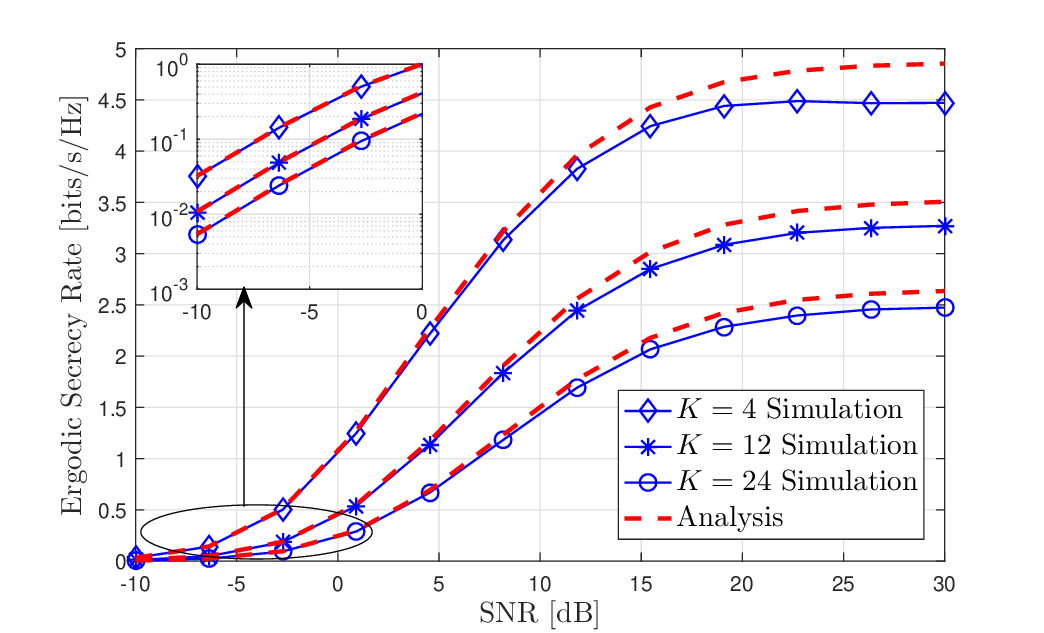}
             \vspace{-0.7cm}
            \caption{$\bar R_k^{(s)}$ vs. SNR for $N=128$, $\rho=0.5$ and $L=10$ in FHS}\label{ESR_rho_fig_ref}\vspace{-0.2cm}
        \end{figure}	

In RSMA, we are particularly interested in how the power-splitting factor $\rho$ affects both the ergodic sum rate and the ergodic secrecy rate. Fig.~\ref{ER_rho_fig_ref} presents not only the ergodic sum rate versus $\rho$ but also how the ergodic secrecy rate changes with $\rho$ in ORs.
In Fig.~\ref{ER_rho_fig_ref}, the ergodic sum rate first increases up to a peak point, after which it decreases as $\rho$ increases. In contrast, the ergodic secrecy rate is always monotonically decreasing with $\rho$. It is also worth noting that apart from boosting both the ergodic sum rate and the ergodic secrecy rate, a larger scale of antennas also makes the ergodic secrecy rate decrease slowly. This implies that multiple antennas play an important role in improving the tradeoff between the ergodic sum rate and the ergodic secrecy rate in RSMA.

Last but not least, from the presented numerical results, we can easily observe that the derived expressions provide good approximations of the real performance in the considered RSMA-aided LMS system, even for not-so-large $K$ and $N$.

\begin{figure}[!t]
             \centering
             \includegraphics[width= 3.5in]{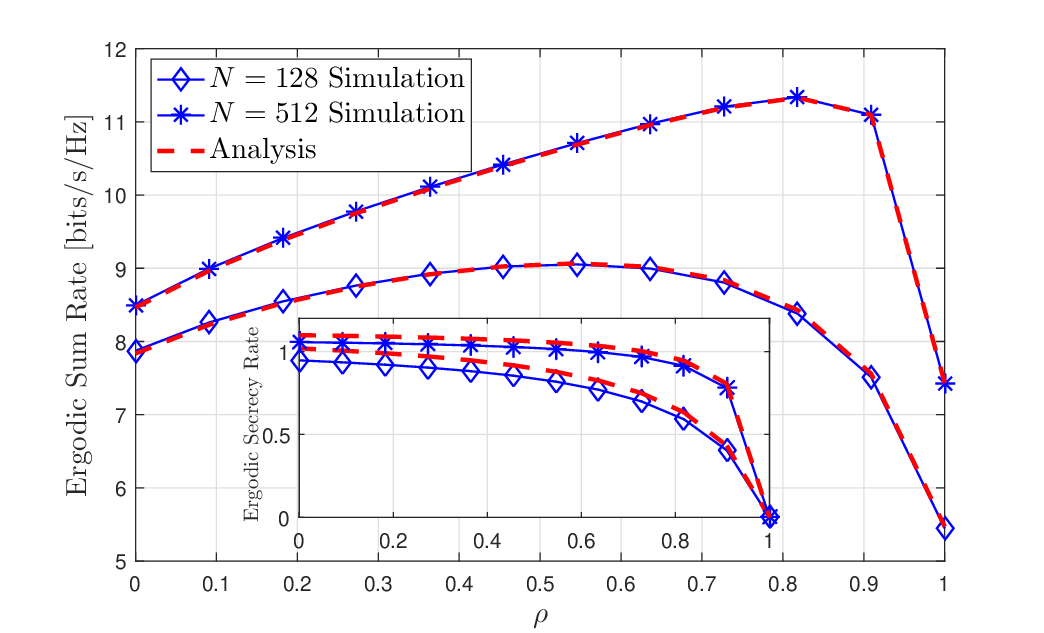}
             \vspace{-0.7cm}
            \caption{$\bar R_{\rm sum}$ and $\bar R_k^{(s)}$ vs. $\rho$ for $K=6$,  $\text{SNR}=0$ dB and $L=10$ in ORs}\label{ER_rho_fig_ref}\vspace{-0.2cm}
        \end{figure}

\section{Conclusions}
We have investigated the delivery performance of RSMA in LMS systems. Specifically, we have derived simple but tight approximations for the ergodic rates for decoding the CM and for decoding the PM at the intended user respectively. More importantly, we have also analyzed the secrecy performance of the considered RSMA-aided LMS system by deriving a tight approximation for the secrecy ergodic rate. 

\ifCLASSOPTIONcaptionsoff
  \newpage
\fi

\bibliographystyle{IEEEtran}				
\bibliography{IEEEabrv,aBiblio}


\appendices
\renewcommand{\thesectiondis}[2]{\Roman{section}:}
\section{Proof of Lemma~\ref{DEF_prop}}\label{DEF_prop_proof}
As ${\bf h}_k \sim {\rm SR}(m_k,\Omega_k,b_k)$, we can have that
\begin{align}
    &D_k = \mathbb{E}\big\{ || {\bf h}_k||^4 \big\} =  \mathbb{E}\bigg\{ \sum_{i=1}^N |{\bf h}_k(i)|^2 \sum_{j=1}^N | {\bf h}_k(j)|^2  \bigg\} \notag\\
    &= \sum_{i=1}^N  \mathbb{E} \big\{ | {\bf h}_k(i)|^4 \big\} \!+\! \sum_{i=1}^N \sum_{j=1, j\neq i}^N \!\! \mathbb{E} \big\{ | {\bf h}_k(i)|^2\big\} \mathbb{E} \big\{ | {\bf h}_k(j)|^2\big\}, \notag
\end{align}
which yields $D_k$ in Lemma~\ref{DEF_prop} by using \cite[Prop. 5.1]{Zhao_defense}.

For  $E_{k,i}$, we have that
$
E_{k,i} 
\!=\! \mathbb{E}\big\{ {\rm Tr} \{  {\bf h}_k^*  {\bf h}_k^T  {\bf h}_i^* {\bf h}_i^T\} \big\} \!=\! {\rm Tr} \big\{ \mathbb{E} \{  {\bf h}_k^*  {\bf h}_k^T \}  \mathbb{E} \{  {\bf h}_i^*  {\bf h}_i^T\} \big\},
$
which yields $E_{k,i}$ in Lemma~\ref{DEF_prop}.

For $F_{i,k}$, we have that
\begin{align}\label{G_j_k_1}
   F_{i,k} 
   &= \sum\nolimits_{\vartheta=1}^N \mathbb{E} \{ {\bf h}_i(\vartheta) \} \mathbb{E} \bigg\{  {\bf h}_k^*(\vartheta) \sum\nolimits_{\ell=1}^N  | {\bf h}_k(\ell)|^2 \bigg\},
\end{align}
where
\begin{align}
    &\mathbb{E} \bigg\{  {\bf h}_k^*(\vartheta) \sum\nolimits_{\ell=1}^N  | {\bf h}_k(\ell)|^2 \bigg\} \notag\\
    &= \mathbb{E} \{  {\bf h}_k^*(\vartheta) | {\bf h}_k(\vartheta)|^2 \} + \mathbb{E} \bigg\{  {\bf h}_k^*(\vartheta)  \sum\nolimits_{\ell=1, \ell \neq \vartheta}^N  | {\bf h}_k(\ell)|^2 \bigg\} \notag\\
    & = \mathbb{E} \{ {\bf h}_k^*(\vartheta) |{\bf h}_k(\vartheta)|^2 \}  + C_k (N-1)A_k.
\end{align}
We can write ${\bf h}_k(\vartheta)$ as ${\bf h}_k(\vartheta) = Z + X + \jmath Y$, where $Z, X, Y$ are independent. Specifically, $Z \sim {\rm Nakagami}(m_k,\Omega_k)$, while $X, Y \sim \mathcal{N}(0,b_k)$. We have that
\begin{align}\label{G_j_k_3}
    \mathbb{E} \{ {\bf h}_k^*(\vartheta) |{\bf h}_k(\vartheta)|^2 \} &= \mathbb{E} \{ (Z+X-\jmath Y) ((Z+X)^2+Y^2) \} \notag\\
    & = \mathbb{E} \{ Z^3 + 3 X^2 Z \} + \mathbb{E} \{ Z+X \} \mathbb{E} \{ Y^2 \}   \notag\\
    &= \frac{\Gamma(m_k+\frac{3}{2})}{\Gamma(m_k)} \Big( \frac{\Omega_k}{m_k} \Big)^{3/2} + 3 b_k C_k .
\end{align}
Combining \eqref{G_j_k_1}--\eqref{G_j_k_3} yields $F_{i,k}$ in Lemma~\ref{DEF_prop}.

Finally,
$
G_{i,k,j} = {\rm Tr} \big\{ \mathbb{E} \big\{ {\bf h}_k^* {\bf h}_k^T    \big\}  \mathbb{E} \big\{ {\bf h}_j^* \big\} \mathbb{E} \big\{ {\bf h}_i^T\big\} \big\},
$
which easily yields $G_{i,k,j}$ in Lemma~\ref{DEF_prop}.

\vfill

\end{document}